\documentclass{article}

\usepackage{arxiv}

\usepackage[utf8]{inputenc} 
\usepackage[T1]{fontenc}    
\usepackage{hyperref}       
\usepackage{url}            
\usepackage{booktabs}       
\usepackage{amsfonts}       
\usepackage{nicefrac}       
\usepackage{microtype}      
\usepackage{lipsum}
\usepackage{graphicx}
\graphicspath{ {./images/} }

\usepackage[utf8]{inputenc}

\usepackage{amsmath}
\usepackage{amssymb}
\usepackage{amsthm}
\usepackage{amsfonts}
\usepackage{graphicx}
\usepackage{multirow}
\usepackage{xcolor}
\usepackage{hyperref}

\usepackage{subcaption}

\newtheorem{theorem}{Theorem}
\newtheorem{lemma}{Lemma}

\newtheorem{idconst}{Identifiability constraint}

\usepackage{natbib}

\title{Model-based sparse mixed-type PCA}

\author{
 Lauri Heinonen \\
  Department of Mathematics and Statistics\\
  University of Turku\\
  Turku, Finland \\
  \texttt{lauri.k.heinonen@utu.fi} \\
   \And
 Joni Virta \\
  Department of Mathematics and Statistics\\
  University of Turku\\
  Turku, Finland \\
  \texttt{joni.virta@utu.fi}
}

\date{June 2026}

\begin{document}
\maketitle
\begin{abstract}
This work presents a new method for principal component analysis (PCA) of a mixed-type data consisting of continuous, binary, integer-valued and positive continuous variables. The data are assumed to come from a probability model, where the parameters of the exponential family distributions are determined by a set of shared Gaussian latent variables. The proposed method, MTPCA, is based on estimating the covariance matrix of these latent mixtures through the method of moments. A way to sparsify the component loadings is presented and aligns with the classical theory of sparse PCA. We propose a strategy for estimating the principal component scores and discuss the choice of the latent dimension. The method's performance is studied with a simulated mixed-type data and we illustrate the model on the Zoo data set consisting of binary animal characteristics.
\end{abstract}

\keywords{Dimension reduction \and Exponential family \and Mixed-type data \and Principal component analysis \and Sparsity}

The work of LH and JV was supported by the Research Council of Finland (grants 347501, 353769, 368494).

\section{Introduction}

\subsection{Problem description and past approaches}

A common question in statistics is whether there exists some latent structure behind the observed variables. The role of this question does not decrease in the age of increasing amounts of data. The most common way to find these factors is principal component analysis (PCA) \citep{jolliffe2002principal}. The output of PCA consists of two key components: the \textit{loadings} that give the strength of connection between the observed variables and the latent factors, and the \textit{scores} that give the estimated values of the factor variables for each observation. One key limitation of PCA is that it is best suited for continuous-type variables having real values. More precisely, regular PCA arises naturally from assuming a Gaussian distribution for the observed variables as in probabilistic PCA \citep{tipping1999probabilistic}. However, it is clear that similar types of questions about latent factors, loadings and scores arise also for variables of binary, integer-valued or positive continuous type that are regularly observed in almost all applications.

Different kinds of approaches have been taken to develop methods analogous to PCA for various non-Gaussian variables or datasets of mixed-type. Some are specific to a variable type or combination thereof: \cite{kolenikov2004use} developed PCA based on polychoric correlations for discrete/binary data, \cite{virta2023poisson} used the multivariate Poisson-log normal distribution \citep{aitchison1989multivariate} to formulate a tractable PCA-model for data consisting of count-valued matrices, and \cite{chavent2022multivariate} used generalized SVD to carry out a hybrid of PCA and multiple correspondence analysis for data consisting of both numeric and categorical data.

A particularly popular approach to mixed-type PCA is to use properties of the (natural) exponential family, employing approaches similar to generalized linear models: In \cite{collins2001generalization}, the natural parameter of an exponential family distribution is presented as a linear combination of principal components through a link function. The components are estimated through minimizing the Bregman divergence, corresponding to maximizing the log-likelihood of the data distribution. The solution is found one component at a time by an algorithm alternating between the score and loading vectors. Another GLM-based method is the generalized factor model \citep{Liu03042023} in which the latent factors are modeled as non-random and the authors give a likelihood-based rule to determine a suitable number of factors. They propose a penalized extension that can be used to promote, for example, sparsity. Also the behavior of the estimators, when $n\to\infty$ and $p\to\infty$, is studied. In \cite{kong2024generalized} a generalized matrix factor model is presented for matrix-valued data consisting of mixed-type entries and the authors derive consistent estimators for the model parameters under a high-dimensional setting. Generalized latent trait models \citep{moustaki2000generalized} are a similar type of generalization coming from the background of latent trait models in psychology.

Similar kinds of approaches have also been taken for other close or related problems: In \cite{10.1093/biomtc/ujaf065} an algorithm is developed for sufficient dimension reduction with discrete predictors. In \cite{LIU2023107822} and \cite{liu2025high}, the generalized factor model from \cite{Liu03042023} is used for imputation of missing values.

\subsection{Our contributions}

The previous methods for mixed-type PCA are mostly algorithmic, focusing on how to solve the latent components using optimization. In this work, we develop a statistical approach, MTPCA (for Mixed-Type PCA), that is based on a well-defined and interpretable probability model. As is common to statistical factor models, we employ two levels of random variables, which are connected by the relation
\begin{align}\label{eq:model_simplification}
    X_j \mid Z \sim \mathcal{D}_j( g_j(u_j'Z) ), \quad Z \sim N_d(0, \Lambda),
\end{align}
where $\mathcal{D}_j$ is an exponential family distribution, $g_j$ a link function and $u_j$ a loading vector, $j = 1, \ldots, p$. Given the observed $X_1, \ldots, X_p$ (or a sample thereof), the objective in model \eqref{eq:model_simplification} is to estimate the loading vectors $u_1, \ldots, u_p$ and the latent score vector $Z$. The model \eqref{eq:model_simplification} can essentially be seen as an extension of probabilistic PCA (PPCA) \citep{tipping1999probabilistic} to mixed-type data.


To estimate the loadings $u_j$, we show how the covariances of the latent combinations $u_j'Z$ can be computed using the method of moments, after which regular PCA can be carried out on this covariance matrix to find the principal component loadings. The scores $Z$ can then be estimated by maximizing a specific, strictly concave log-likelihood function.

Compared to the earlier works, our model admits explicit closed-form formulas for the covariances of the combinations of the principal components in almost all cases. Only in the case of binary-to-binary interaction is the corresponding covariance solved through an implicit formula. This enables exact and fast computation without iterative procedures, and instantly guarantees the consistency of the latent covariances estimators. 

Basing the estimation of the loadings on the covariance matrix of the latent combinations $u_j' Z$ also gives us two distinct advantages. Firstly, this lets us naturally accommodate sparsity in a way similar to sparse PCA \citep{zou2006sparse}, allowing us to produce sparse estimates of loading vectors $u_1, \ldots, u_p$. Of the previous mixed-type PCA models, sparsity has been discussed only in \cite{Liu03042023}, and even there very briefly. Secondly, this lets the method simultaneously accommodate non-linear estimation (via the link functions) but also benefit from all the standard covariance matrix based tools familiar to users of PCA, i.e., scree plot, Kaiser rule, parallel plots, factor rotations, etc. Of these, we demonstrate sparsity (Section \ref{sec:sparsity}) and scree plots (Section \ref{sec:examples}) later in the work.


\subsection{Manuscript structure}

Section \ref{sec:proposal} formulates the proposed MTPCA method in several parts: In Sections \ref{sec:model} and \ref{sec:estimation}, we carefully describe our model and give the formulas for estimating the model parameters. We also show how to impose sparsity on the model using sparse PCA (Section \ref{sec:sparsity}), how to estimate the principal component scores (Section \ref{sec:scores}) and how to decide on the correct number of components (Section \ref{sec:number_of_components}). In Section \ref{sec:examples}, we study the method's performance using simulated mixed-type data and illustrate it further in the context of the Zoo data consisting of binary variables.

\section{The MTPCA procedure} \label{sec:proposal}

\subsection{Model} \label{sec:model}



We begin by giving a more detailed description of our proposed model \eqref{eq:model_simplification}.  Let $Z \sim \mathrm{N}_d(0, \Lambda)$ represent the latent vector of multivariate normal random variables with diagonal covariance matrix $\Lambda = \mathrm{diag}(\lambda_1, \ldots, \lambda_d)$ and the variances $\lambda_1 > \lambda_2 > \cdots > \lambda_d > 0 $. Assume that our observed random vector $X$ of length $p$ is partitioned into four subvectors $X_1 \in \mathbb{R}^{p_1}$, $X_2 \in \mathbb{R}^{p_2}$, $X_3 \in \mathbb{R}^{p_3}$, $X_4 \in \mathbb{R}^{p_4}$ that take values in $\mathbb{R}^{p_1}$, $\mathbb{R}_+^{p_2}$, $\mathbb{N}_0^{p_3}$, $\{0 ,1\}^{p_4}$, respectively. That is, the vectors contain real-valued (type 1), positive (type 2), count (type~3) and binary (type 4) data, respectively.

Let $W := \mu + U Z$ for a vector $\mu \in \mathbb{R}^p$ and matrix $U \in \mathbb{R}^{p \times d}$ with orthonormal columns and partition $W$ into $W_1, W_2, W_3, W_4$ similarly to $X$. Assume further that the observed data $X$ depends on the vector $W$ as follows:
\begin{align}\label{eq:our_model}
\begin{split}
    X_{1j} | Z &\sim \mathrm{N}(W_{1j}, \tau_j^2), \quad j = 1, \ldots, p_1, \\
    X_{2j} | Z &\sim \mathrm{Exp}(e^{W_{2j}}), \quad j = 1, \ldots, p_2, \\
    X_{3j} | Z &\sim \mathrm{Poi}(e^{W_{3j}}), \quad j = 1, \ldots, p_3, \\
    X_{4j} | Z &\sim \mathrm{Ber}(\Phi(W_{4j})), \quad j = 1, \ldots, p_4,
\end{split}
\end{align}
where $\Phi$ is the CDF of the standard normal distribution, and each variable type has its own distribution and corresponding link function. We assume that all elements of $X$ are conditionally independent conditional on $Z$, meaning that \eqref{eq:our_model} fully specifies the distribution of the vector $X$. For type 1 (conditionally Gaussian continuous variables), we further assume that all variance parameters $\tau_j^2$ satisfy $\tau_j^2 > 0$. To summarize, $X$ consists of $p$ mixed-type observed variables that are all driven by the shared set of $d$ normal latent variables collected in $Z$. The parameters of interest are $\mu = \mathrm{E}(W) \in \mathbb{R}^p$ and the latent covariance matrix $\Sigma := \mathrm{Cov}(W) = U \Lambda U'$, based on whose truncated eigen decomposition the loadings $U \in \mathbb{R}^{p \times d}$ and the principal component variances on the diagonal of $\Lambda \in \mathbb{R}^{d \times d}$ can be determined.

As stated earlier, \eqref{eq:our_model} is a natural extension of probabilistic principal component analysis \citep{tipping1999probabilistic} for mixed-type data. However, contrary to PPCA, in model \eqref{eq:our_model} observation noise is not modelled with additive error terms (except in the case of type~1 variables), but rather comes from the conditional sampling distributions, analogously to generalized linear models.

Due to the presence of multiple variables types, we use in the sequel indexing conventions that differ somewhat from the standard ones. Namely, we let $\mu_i$ denote an element of the mean vector $\mu = \mathrm{E}(W)$ and $\sigma_i^2$ a diagonal element of the matrix $\Sigma = \mathrm{Cov}(W)$ corresponding to a \textit{generic variable of type $i$}. That is, $\mu_2$ does not represent the second element of $\mu$, but the mean value of $W$ for a generic variables of type 2. This convention allows presenting the estimation procedure below in a succinct manner. Similarly, let $\sigma_{ij}$ be a non-diagonal element of $\Sigma$ corresponding to generic variables of types $i$ and $j$, with $\sigma_{ii'}$ denoting the element corresponding to two different generic variables of type $i$.

The model parameters for the 0/1-Bernoulli variables (type 4) are not fully identifiable since $\mathrm{E}(X_{4j}) = \mathrm{E}(X_{4j}^2)$. Hence, we impose a constraint that connects the corresponding mean and variance parameters, and as such allows identifying them.

\begin{idconst}\label{id:bernoulli}
    Each of the Bernoulli-variables (type 4) satisfies
    \begin{align*}
        \frac{\mu_4}{\sqrt{1+\sigma_4^2}} = s \sqrt{ \sigma_4^2 },
    \end{align*}
    for some sign $s \in \{ -1 , 1 \}$.
\end{idconst}

Identifiability constraint \ref{id:bernoulli} does not limit the range of type 4 marginal distributions that fall under the model \eqref{eq:our_model}.

As in probabilistic PCA \citep{tipping1999probabilistic}, we assume that the normal variances are all equal, $\tau_1^2 = \cdots = \tau_{p_1}^2$, making the Gaussian part have isotropic errors. For now, we also assume that the $\tau_j^2$ are known and will discuss how these can be estimated in the next subsection. 






\subsection{Estimation of the model parameters} \label{sec:estimation}

Assume that we have observed a sample $X_1, \ldots, X_n$ from the proposed model \eqref{eq:our_model}. MTPCA estimates the parameters $\mu, \Sigma$ using the method of moments. This requires computing the first moments and the second moments and cross-moments for the four different types of variables and solving for the unknown parameters. In all but one case this leads to closed-form estimators, given below in Theorem \ref{theo:estimates}. The sole exception is the second cross-moment between two Bernoulli-variates, which we discuss separately below and which we solve numerically in practice. Theorem \ref{theo:estimates} follows directly from the auxiliary Theorem \ref{theo:moments} in Appendix \ref{sec:proofs}, along with the law of large numbers and the continuous mapping theorem. Below, we denote by $\overline{X_i}$ the sample mean of a generic variable of type $i$, by $\overline{X_i^2}$ the corresponding sample second moment, and by $\overline{X_i X_j}$ the sample second cross-moment between generic variables of types $i$ and $j$.


\begin{theorem}\label{theo:estimates}
Assume that Identifiability constraint \ref{id:bernoulli} holds. Then the following are consistent estimators of the corresponding model parameters.
\begin{align*}
    \hat{\mu}_1 &:= \overline{X_1} \\
    \hat{\mu}_2 &:= -\frac{1}{2}\log2-2\log\overline{X_2}+\frac{1}{2}\log\overline{X_2^2} \\
    \hat{\mu}_3 &:= 2\log\overline{X_3}-\frac{1}{2}\log\left(\overline{X_3^2}-\overline{X_3}\right)\\
    \hat{\mu}_4 &:= \Phi^{-1}(\overline{X_4})\sqrt{ 1 + \left\{\Phi^{-1}(\overline{X_4})\right\}^2 }\\
    \hat{\sigma}_1^2 &:= \overline{X_1^2}-\overline{X_1}^2 - \tau_j^2 \\
    \hat{\sigma}_2^2 &:= -\log2+\log\overline{X_2^2}-2\log\overline{X_2}\\
    \hat{\sigma}_3^2 &:= \log\left(\overline{X_3^2}-\overline{X_3}\right)-2\log\overline{X_3} \\
    \hat{\sigma}_4^2 &:= \left\{\Phi^{-1}(\overline{X_4})\right\}^2 \\
    \hat{\sigma}_{11'} &:= \overline{X_1 X_{1'}}-\overline{X_1}\ \overline{X_{1'}} \\
    \hat{\sigma}_{22'} &:= \log\overline{X_2 X_{2'}}-\log\overline{X_2}-\log\overline{X_{2'}} \\
    \hat{\sigma}_{33'} &:= \log\overline{X_3 X_{3'}}-\log\overline{X_3}-\log\overline{X_{3'}} \\
    \hat{\sigma}_{12} &:= \frac{\overline{X_1}\ \overline{X_2} - \overline{X_1 X_2}}{\overline{X_2}} \\
    \hat{\sigma}_{13} &:= \frac{\overline{X_1 X_3} - \overline{X_1}\ \overline{X_3}}{\overline{X_3}} \\
    \hat{\sigma}_{14} &:= (\overline{X_1 X_4} - \overline{X_1}\ \overline{X_4})\frac{\sqrt{ 1 + \left\{\Phi^{-1}(\overline{X_4})\right\}^2 }}{\varphi\left(\Phi^{-1}(\overline{X_4})\right)}\\
    \hat{\sigma}_{23} &:= \log\overline{X_2}+\log\overline{X_3}-\log\overline{X_2 X_3} \\
    \hat{\sigma}_{24} &:= \sqrt{ 1 + \left\{\Phi^{-1}(\overline{X_4})\right\}^2 }\left\{\Phi^{-1}(\overline{X_4}) - \Phi^{-1}\left(\frac{\overline{X_2 X_4}}{\overline{X_2}}\right)\right\}\\
    \hat{\sigma}_{34} &:= \sqrt{ 1 + \left\{\Phi^{-1}(\overline{X_4})\right\}^2 }\left\{\Phi^{-1}\left(\frac{\overline{X_3 X_4}}{\overline{X_3}}\right) - \Phi^{-1}(\overline{X_4})\right\}
\end{align*}

\end{theorem}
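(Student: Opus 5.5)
The proof has two steps. First, compute the population first moments, second moments and cross-moments of the observed variables as explicit functions of $(\mu,\Sigma)$; this is the content of the auxiliary Theorem \ref{theo:moments}. Second, invert these relations and pass to sample moments by the law of large numbers and the continuous mapping theorem.

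For the moments, I condition on $Z$ and use conditional independence, so that every mixed moment factorizes into conditional moments of single coordinates. The conditional moments are:
\begin{align*}
\mathrm{E}(X_{1}\mid Z) &= W_1, & \mathrm{E}(X_1^2\mid Z) &= W_1^2+\tau^2,\\
\mathrm{E}(X_2\mid Z) &= e^{-W_2}, & \mathrm{E}(X_2^2\mid Z) &= 2e^{-2W_2},\\
\mathrm{E}(X_3\mid Z) &= e^{W_3}, & \mathrm{E}(X_3^2\mid Z) &= e^{W_3}+e^{2W_3},\\
\mathrm{E}(X_4\mid Z) &= \Phi(W_4). & &
\end{align*}
The unconditional moments then follow from Gaussian identities for the jointly normal vector $W$:
\begin{itemize}
\item the lognormal moment generating function $\mathrm{E}\,e^{aW_i+bW_j}=\exp\{a\mu_i+b\mu_j+\tfrac12(a^2\sigma_i^2+2ab\sigma_{ij}+b^2\sigma_j^2)\}$;
\item Stein's lemma, $\mathrm{E}\{W_1 h(W_j)\}=\mu_1\mathrm{E}h(W_j)+\sigma_{1j}\mathrm{E}h'(W_j)$;
\item the probit identity $\mathrm{E}\,\Phi(a+W)=\Phi(a/\sqrt{1+s^2})$ for $W\sim N(0,s^2)$;
\item an exponential tilt, $\mathrm{E}\{e^{cW_i}\Phi(W_4)\}=\mathrm{E}(e^{cW_i})\,\Phi\bigl((\mu_4+c\sigma_{i4})/\sqrt{1+\sigma_4^2}\bigr)$, obtained by completing the square.
\end{itemize}
For example, the tilt identity gives $\mathrm{E}(X_3X_4)/\mathrm{E}(X_3)=\Phi((\mu_4+\sigma_{34})/\sqrt{1+\sigma_4^2})$. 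Stein's lemma gives $\mathrm{Cov}(X_1,X_3)=\sigma_{13}\mathrm{E}(X_3)$ and $\mathrm{Cov}(X_1,X_4)=\sigma_{14}\varphi(\mu_4/\sqrt{1+\sigma_4^2})/\sqrt{1+\sigma_4^2}$.

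Next I invert these relations case by case.
\begin{itemize}
\item Types 2 and 3: take logarithms of the lognormal formulas. This gives a linear system in $(\mu_i,\sigma_i^2)$ from the first two moments; for type 3 the second moment is replaced by the factorial moment $\mathrm{E}(X_3^2)-\mathrm{E}(X_3)$. The cross terms $\sigma_{22'},\sigma_{33'},\sigma_{23}$ come from log-ratios.
\item Type 4: Identifiability constraint \ref{id:bernoulli} gives $\Phi^{-1}(\mathrm{E}X_4)=\mu_4/\sqrt{1+\sigma_4^2}=s\sigma_4$. Hence $\sigma_4^2=\{\Phi^{-1}(\mathrm{E}X_4)\}^2$ and $\mu_4=\Phi^{-1}(\mathrm{E}X_4)\sqrt{1+\sigma_4^2}$, with the sign recovered automatically.
\item Cross terms $\sigma_{24}$ and $\sigma_{34}$: apply $\Phi^{-1}$ to the tilt ratios, subtract $\Phi^{-1}(\mathrm{E}X_4)$, and multiply by $\sqrt{1+\sigma_4^2}$. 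For type 2 the tilt is $c=-1$, which explains the reversed sign in $\hat\sigma_{24}$.
\item Cross terms $\sigma_{12},\sigma_{13},\sigma_{14}$: divide the covariance by the appropriate factor. For type 2 we have $\mathrm{E}\,\partial_w e^{-w}=-\mathrm{E}X_2$, which again gives a sign flip.
\item Type 1: $\sigma_1^2=\mathrm{Var}(X_1)-\tau^2$ and $\sigma_{11'}=\mathrm{Cov}(X_1,X_{1'})$.
\end{itemize}

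This yields each parameter as $\theta=h(m)$, where $m$ is a vector of population moments and $h$ is continuous at $m$. All required moments are finite because lognormal variables have moments of every order. The logarithms, divisions and $\Phi^{-1}$ are evaluated at interior points: $\mathrm{E}X_2>0$, $0<\mathrm{E}X_4<1$, and the tilt ratios lie in $(0,1)$. The strong law of large numbers gives $\overline{\,\cdot\,}\to m$ almost surely, and the continuous mapping theorem then gives $h(\overline{\,\cdot\,})\to\theta$, which is consistency.

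The main obstacle I expect is the type-4 cross moments, that is, verifying the tilt identity and keeping the signs right in $\hat\sigma_{24}$ versus $\hat\sigma_{34}$. I would handle these by writing $W_4=\mu_4+\frac{\sigma_{i4}}{\sigma_i^2}(W_i-\mu_i)+\varepsilon$ with $\varepsilon$ independent of $W_i$, integrating out $\varepsilon$ with the probit identity, and then completing the square in the lognormal integral. A secondary point is to ensure the sample quantities fall in the domains of $\log$ and $\Phi^{-1}$; this holds with probability tending to one, which is enough for consistency.
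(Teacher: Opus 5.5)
Your proposal is correct and follows the paper's route: compute the population moments by conditioning on $Z$ (the paper's Theorem~\ref{theo:moments}, obtained via Lemma~\ref{lem:exp_integral} and Owen's Gaussian integrals, which your Stein's-lemma, probit and exponential-tilt identities reproduce), invert them to recover exactly the listed estimators, and conclude by the law of large numbers and the continuous mapping theorem. Your explicit check that the arguments of the logarithms, divisions and $\Phi^{-1}$ lie in the interiors of their domains, so the sample versions are well defined with probability tending to one, is a detail the paper leaves implicit.
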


The only parameter whose estimator is missing from Theorem \ref{theo:estimates} is the Bernoulli cross-moment $\sigma_{44}$. By Theorem \ref{theo:moments_2} in Appendix~\ref{sec:proofs}, a natural estimator for it is $\hat{\sigma}_{44}$ defined implicitly as
\begin{align}
    & \Phi_2\left( \Phi^{-1}(\overline{X_4}) , \Phi^{-1}(\overline{X_{4'}}); \frac{\hat{\sigma}_{44'}}{\sqrt{1 + \{ \Phi^{-1}(\overline{X_4}) \}^2} \sqrt{1 + \{ \Phi^{-1}(\overline{X_{4'}}) \}^2 }} \right) -\overline{X_4 X_{4'}} = 0, \label{eq:bernoulli_bernoulli} 
\end{align}
where $\Phi_2(y_1, y_2; \rho)$ is the cumulative distribution function of the bivariate normal distribution with zero means, unit variances and correlation equal to $\rho$. By \cite{plackett1954reduction}, we have $(\partial/\partial \rho)\Phi_2(y_1, y_2; \rho) > 0$ for all $y_1, y_2$, see also formula~(2) in \cite{genz2004numerical}. This implies that any possible solution $\hat{\sigma}_{44'}$ to \eqref{eq:bernoulli_bernoulli} is unique. A natural search interval is $[-\hat{\sigma}_4 \hat{\sigma}_{4'}, \hat{\sigma}_4 \hat{\sigma}_{4'}]$, since the population level covariance matrix is positive-definite. In practice, when the data does not necessarily obey the model, it can happen that a solution to \eqref{eq:bernoulli_bernoulli} does not exist in the search interval, in which case we take as an estimator the value that minimizes the absolute value of the left-hand side of~\eqref{eq:bernoulli_bernoulli}.

By combining the Bernoulli cross-moment estimator with the quantities in Theorem \ref{theo:estimates}, we then obtain the MTPCA-estimators $\hat{\mu} \in \mathbb{R}^p$ and $\hat{\Sigma} \in \mathbb{R}^{p \times p}$. That is, $\hat{\mu}$ is built from four subvectors, each corresponding to estimators for one type of variable, and $\hat{\Sigma}$ consists of $4 \times 4$ blocks, likewise built from the corresponding second moment and cross-moment estimators. To guarantee the positive semi-definiteness of $\hat{\Sigma}$, we threshold any possible negative eigenvalues to zero, and compute its truncated eigendecomposition to form estimates for the loadings and principal component standard deviations, $\hat{U} \in \mathbb{R}^{p \times d}$ and $\hat{\Lambda} \in \mathbb{R}^{d \times d}$. The loadings have the same identifiability indeterminacy as in standard PCA: the signs of the columns of $\hat{U}$ can be chosen freely and, in case $\hat{\Sigma}$ has non-zero eigenvalues of multiplicity greater than one, then there is more freedom in the choice of the corresponding eigenvectors. However, based on our experience, the latter is unlikely to occur with real data. The truncation of the eigendecomposition requires knowing the true number $d$ of latent variables (dimension of $Z$) and we assume it to be known for now, and discuss choosing it later in Section \ref{sec:number_of_components}.

Our earlier assumption that $\tau^2_i = \tau^2_j =: \tau^2$ for all $i,j$ makes the model have isotropic Gaussian error. One particular consequence of this is that if $p_2 = p_3 = p_4 = 0$, i.e., if the data consists solely of continuous variables (type 1), then the proposed method reduces to probabilistic PCA. Under the isotropy assumption the estimation of $\tau$ can be carried out in the following way: Assume that the real number of principal components $d$ is smaller than the number of Gaussian components $p_1$. Then the eigenvalues of the covariance matrix of all $p_1$ variables of type~1 are $(\lambda_1+\tau^2,\dots,\lambda_d+\tau^2,\tau^2,\dots,\tau^2)$. Hence, a natural, consistent estimator for $\tau^2$ is the average of the $p_1-d$ smallest eigenvalues of the sample covariance matrix of the variables of type 1.

\subsection{Imposing sparsity on the parameters} \label{sec:sparsity}

We next describe how the estimated loadings can be made sparse, by which we mean that some elements of the component loading vectors in $\hat{U}$ are forced to zero in the estimation process. The advantages of sparsity are that the loadings are easier to interpret and sparsity in general can be used to avoid overfitting.


We impose sparsity in MTPCA by first estimating the latent covariance matrix $\hat{\Sigma}$ as described in Section \ref{sec:estimation} and then calculating the sparse loadings $\hat{U}$ by applying to $\hat{\Sigma}$ the method SPCA \citep{zou2006sparse}, which formulates PCA as a regression/minimization problem. In this work, we run SPCA using its more general implementation discussed in \cite{li2007sparse, HEINONEN2026105587} which corresponds to solving the optimization problem, 
\begin{align}\label{eq:sparsity_opt}
    \mathrm{argmin} \left\{ \| (I_p - A B') \hat{\Sigma}^{1/2} \|_F^2 + \sum_{j = 1}^d \nu_j \| b_j \|_1 \right\},
\end{align}
over $B = (b_1, \ldots, b_d) \in \mathbb{R}^{p \times d}$ and $A \in \mathbb{R}^{p \times d}$ such that $A' A = I_d$. Given a solution $\hat{A}, \hat{B}$, the sparse loadings are then found as $\hat{U} = \hat{B}$. The resulting level of sparsity is controlled by the tuning parameters $\nu_1, \ldots, \nu_d$ such that larger values impose greater sparsity on the corresponding component. In practice, their values are easier to choose by instead specifying a desired number $r_j$ of non-zero loadings per vector, and then determining $\nu_j$ implicitly based on this. For simplicity, in this work we use a fixed value of $r_j =: r$ for all components. See \cite{li2007sparse, HEINONEN2026105587} for more details on solving the optimization problem \eqref{eq:sparsity_opt} and its theoretical properties.



\subsection{Estimation of the latent components} \label{sec:scores}

Having estimated the sparse loadings and the other parameters, the remaining task is to predict the values of the latent $d$-vectors $Z_i$. Denote the observed values of the four types of variables for a single subject by $x_{1j_1}$, $x_{2j_2}$, $x_{3j_3}$, $x_{4j_4}$, where $j_k = 1, \ldots, p_k$, and let the full $(p_1 + p_2 + p_3 + p_4)$-dimensional vector of covariates be $x$. We take as the prediction of $Z$ the mode of the conditional distribution of $Z \mid X = x$. This strategy is natural in that, when applied in the context of standard PCA, it leads to the usual principal components, see Lemma 2 in \cite{virta2023poisson}. The same approach was also used to predict latent components in \cite{li2010simple, kenney2021poisson}. 

The following theorem expresses this task as a problem in convex optimization, where we have denoted the conditional log-density by $\ell(z \mid x) := \log f_{Z \mid X}(z \mid x)$.

\begin{theorem}\label{theo:z_prediction}
    The conditional log-density satisfies
    \begin{align*}
        \ell(z \mid x) =& C - \frac{1}{2} z' \Lambda^{-1} z -  \sum_{j = 1}^{p_1} \frac{1}{2\tau_j^2}(x_{1j} - \mu_{1j} - u_{1j}'z)^2 \\
        -& \sum_{j = 1}^{p_2}(x_{2j} e^{\mu_{2j} + u_{2j}' z} - \mu_{2j} - u_{2j}'z) \\
        -& \sum_{j = 1}^{p_3}\{e^{\mu_{3j} + u_{3j}' z} - x_{3j}(\mu_{3j} + u_{3j}'z)\} \\
        +& \sum_{j = 1}^{p_4} \{ x_{4j} \log \Phi(\mu_{4j} + u_{4j}' z) + (1 - x_{4j}) \log [1 - \Phi(\mu_{4j} + u_{4j}' z) ] \},
    \end{align*}
    for some constant $C$ not depending on $z$. Moreover, the function $z \mapsto \ell(z \mid x)$ is strictly concave for all $x$ and admits a unique maximizer.
\end{theorem}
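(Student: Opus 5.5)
We obtain the formula from Bayes' rule, $f_{Z\mid X}(z\mid x) \propto f_Z(z)\prod_j f_{X_j\mid Z}(x_j\mid z)$. Here the normalising constant $f_X(x)$ is absorbed into $C$, and we use the conditional independence of the coordinates of $X$ given $Z$ assumed in \eqref{eq:our_model}. Taking logarithms, the prior contributes $-\tfrac12 z'\Lambda^{-1}z$ plus a constant. Each of the four conditional densities then contributes a term after we drop the parts that do not depend on $z$, using $w = \mu_{kj} + u_{kj}'z$:
\begin{itemize}
\item the Gaussian density gives $-(x_{1j}-w)^2/(2\tau_j^2)$;
\item the exponential density $e^{w}\exp(-x e^{w})$ gives $w - x e^{w}$;
\item the Poisson mass $e^{-e^{w}}e^{xw}/x!$ gives $xw - e^{w}$, where $-\log x!$ goes into $C$;
\item the Bernoulli mass gives $x\log\Phi(w) + (1-x)\log\{1-\Phi(w)\}$.
\end{itemize}
Summing these reproduces the displayed expression.

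For strict concavity, we treat each summand as a univariate function $h$ composed with the affine map $z\mapsto \mu_{kj}+u_{kj}'z$. Such a composition is concave whenever $h$ is concave, though not necessarily strictly. The quadratic term $-\tfrac12 z'\Lambda^{-1}z$ is strictly concave because $\Lambda$ is positive definite, and its Hessian is $-\Lambda^{-1}\prec 0$. The sum of a strictly concave function and concave functions is strictly concave, so it suffices to check concavity of each univariate $h$.
\begin{itemize}
\item The Gaussian term is a negative quadratic.
\item The exponential term $w - x e^{w}$ is concave because $x\ge 0$; here $x>0$ on the support of type 2.
\item The Poisson term $xw - e^{w}$ is concave.
\item The Bernoulli term is either $\log\Phi(w)$ or $\log\{1-\Phi(w)\} = \log\Phi(-w)$, so it reduces to the log-concavity of $\Phi$.
\end{itemize}
Log-concavity of $\Phi$ is the only nontrivial ingredient and the main obstacle. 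I would invoke the classical fact that the CDF of a log-concave density is log-concave (Prékopa), or verify it directly. Writing $m(w) = \varphi(w)/\Phi(w)$, one has $(\log\Phi)'' = -m(w)\{w + m(w)\}$, so it suffices to show $w\Phi(w) + \varphi(w) > 0$. This holds because this quantity equals $\int_{-\infty}^{w}\Phi(t)\,dt > 0$, as differentiation confirms, together with a limit of zero at $-\infty$.

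For existence of the maximiser, I would show that $\ell(\cdot\mid x)$ is coercive, meaning $\ell(z\mid x)\to-\infty$ as $\|z\|\to\infty$. Every non-prior summand is bounded above:
\begin{itemize}
\item the Gaussian term is at most $0$;
\item the Bernoulli log-probabilities are at most $0$;
\item $w - x e^{w}$ is maximised at $w=-\log x$ when $x>0$;
\item $xw - e^{w}$ is bounded above by $x\log x - x$ when $x>0$, and by $0$ when $x=0$.
\end{itemize}
Hence $\ell(z\mid x) \le C' - \tfrac12 z'\Lambda^{-1}z$, which tends to $-\infty$. By continuity the superlevel sets are compact, so a maximiser exists, and strict concavity makes it unique.
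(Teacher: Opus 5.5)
Your proposal is correct and follows the paper's proof essentially step by step: Bayes' rule with conditional independence gives the formula, strict concavity comes from the quadratic prior term plus concave univariate functions composed with affine maps (log-concavity of $\Phi$ reduced to $w\Phi(w)+\varphi(w)>0$ via its derivative $\Phi(w)$ and its vanishing limit at $-\infty$), and existence and uniqueness come from coercivity. The only small difference is in the coercivity step: you bound the exponential and Poisson summands above by constants, which uses $x_{2j}>0$, whereas the paper bounds them by linear functions of $z$ that the quadratic term dominates, and this also covers the probability-zero case $x_{2j}=0$.
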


Given an observed sample $x_1, \ldots, x_n \in \mathbb{R}^p$, the parameter estimates $\hat{\mu}, \hat{U}, \hat{\Lambda}$ and the latent dimension $d$, we obtain the predictions $\hat{z}_1, \ldots, \hat{z}_n \in \mathbb{R}^d$ as maximizers of the maps $z \mapsto \ell(z \mid x_i)$, $i = 1, \ldots, n$, where the true parameters have been replaced with their estimates. Theorem \ref{theo:z_prediction} implies that these maximizers can be obtained straightforwardly using standard gradient descent.

\subsection{Estimation of the latent dimension} \label{sec:number_of_components}

Since the true dimension $d$ equals the rank of the covariance matrix $\Sigma$, it is natural to estimate it using tools from standard PCA. When using MTPCA for descriptive purposes (as in our real data example in Section \ref{sec:examples}) this can be achieved with the scree plot or any of the standard rules for selecting the number of components in PCA, see Section 6 in \cite{jolliffe2002principal}. Alternatively, one may also choose a small fixed number of components (often $d = 2$ in PCA) to obtain an interpretable and easily visualizable set of loadings and pair-wise scatter plots of the estimated scores.

For a more inferential approach, the ladle estimator of \cite{luo2016combining} could be used. The main idea behind this is to take $M$ bootstrap samples of the data and based on the bootstrap estimates $\hat{\Sigma}^{(1)}, \ldots , \hat{\Sigma}^{(M)}$ pinpoint the dimensionality where the signal subspace changes into the noise subspace, indicated by an increase in eigenvector variability. However, we postpone studying such approaches to future work.

\section{Data examples} \label{sec:examples}

All the R code for the method, MTPCA, and for all examples can be found in the GitHub repository \url{https://github.com/laxuntus/MTPCA}. The implementation allows directly choosing the number of estimated non-zero coefficients in the manner described in Section \ref{sec:sparsity}, making it simple to use in practice.

\subsection{Simulation}

When dimension reduction needs to be applied to mixed-type or categorical data, the classical PCA is still a very popular approach, see, e.g., \cite{prive2020efficient} in the context of genomic data which consist almost solely of 0/1/2-variables. Indeed, despite the connections between PCA and Gaussianity \citep{tipping1999probabilistic}, it often allows uncovering interesting latent structures even for mixed-type data. Moreover, to account for the differing units of measurement, it is advisable to scale each variable in mixed-type data to unit variance.

With the above in mind, we compared the performance of our MTPCA to regular PCA with a simulation study. Datasets were generated from the latent variable model~\eqref{eq:our_model} with five Gaussian, one exponential, one Poisson and one Bernoulli variable. The true loading matrix $U \in \mathbb{R}^{8 \times 2}$ used in the data generation had two columns, each with $q=5$ (first version) or $q=2$ (second version) non-zero coefficients out of the total $p=8$. Datasets had sizes $n=250$ or $n=1500$. For each individual replication of the simulation experiment, the rows of the loading matrix $U$ were further randomly permuted to have the sparsity act on different types of variables. For each combination of $q$ and $n$, 2000 replications were run, see the codes on the GitHub page for more details on the data-generating mechanism.



The loading matrix was estimated with different methods and parameters, and the mean squared Frobenius ($\ell_2$) error of the estimates was calculated. Our proposed method (MTPCA) was used with no sparsity (MTPCA (ns)), with the exact right amount $r=q$ of non-zero coefficients (MTPCA (s)) and with an overestimated amount $r=q+2$ of non-zero coefficients (MTPCA (s+)). As competitors, we used regular PCA with correlation matrix, along with sparse PCA with $r=q$ and $r=q+2$ non-zero coefficients (SPCA and SPCA+). As described in Section \ref{sec:sparsity}, the sparse PCA method \citep{zou2006sparse} was implemented using the SICS-framework of \cite{HEINONEN2026105587} with identity matrix and correlation matrix as the scatter matrices. A random matrix with two orthonormal columns drawn from the corresponding Haar distribution (rand) was used as a baseline. 

Figure \ref{fig:vertailu_q5} shows the violin plots of the mean squared Frobenius norms between the real and estimated loading matrices for the case $q=5$. It can be seen clearly that the non-sparse variant of MTPCA works the best along with sparse MTPCA with $r=q+2=7$, sparse MTPCA with $r=q$ coming in the third place. Regular PCA and sparse PCA behave worse, but still offer a considerable improvement over the random guess, showing that PCA manages to capture some essential piece of the mixed-type factor structure, as predicted above. Interestingly, for some rare iterations MTPCA has high error, but the performance of the PCA variants never goes below 0.70. It can also be seen that increasing the sample size from $n=250$ to $n=1500$ improves the performances of the different MTPCA-variants, giving empirical evidence of their consistency. However, the same is not true for the different variants of PCA.

Figure \ref{fig:vertailu_q2} shows the analogous violin plots of the errors for the more sparse case with $q=2$. Also here, the non-sparse variant of MTPCA works the best along with sparse MTPCA with $r=q+2=4$, the sparse MTPCA with $r=q$ again following. Also now, increasing the sample size improves the performance of different variants of MTPCA but not of PCA. However, contrary to the earlier simulation, here regular and sparse PCA sometimes also achieve very low errors. We also observe that the performances of the non-sparse and $r = q + 2$ variant of MTPCA are in the current scenario almost identical, despite the large difference in the number of non-zero estimated loadings. Thus, the setting demonstrates that the benefits of sparsity in interpretation can be achieved without significant loss in performance.

\begin{figure}
    \centering
    \includegraphics[width=0.9\linewidth]{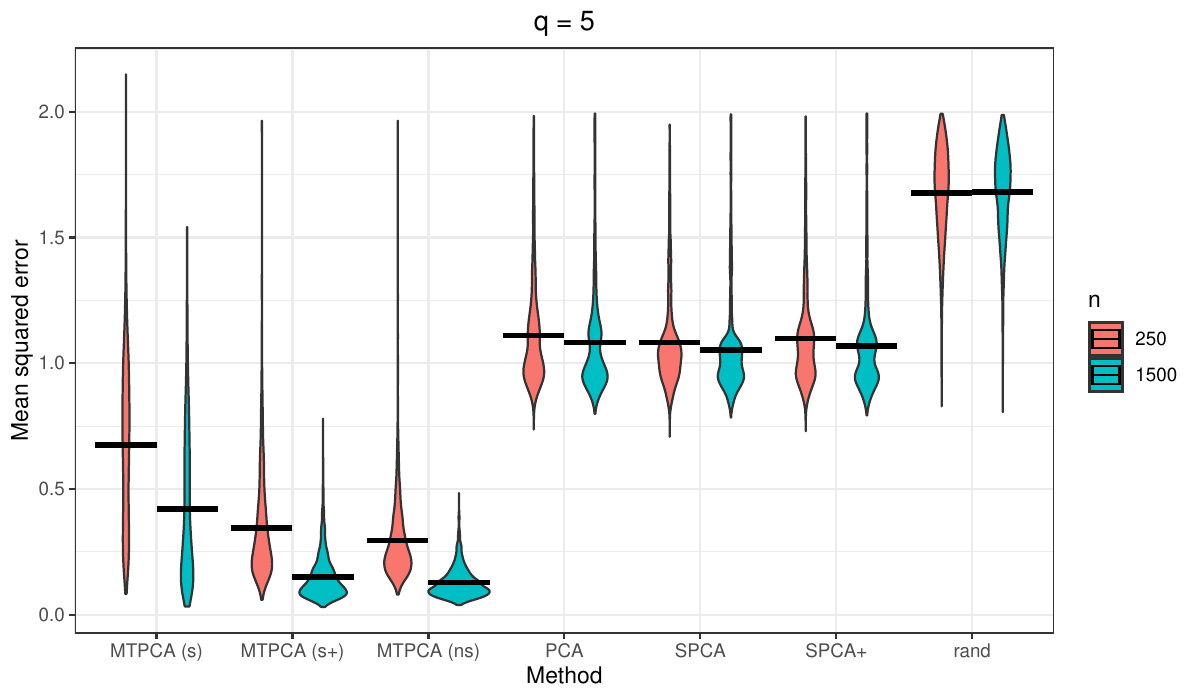}
    \caption{Performances of different methods when the simulated dataset consists of $n=250$ or $n=1500$ observations with $q=5$ out of $p=8$ non-zero coefficients. The $y$-axis gives the mean squared errors when extracting two principal components. The black horizontal lines mark the means of the empirical error distributions.}
    \label{fig:vertailu_q5}
\end{figure}

\begin{figure}
    \centering
    \includegraphics[width=0.9\linewidth]{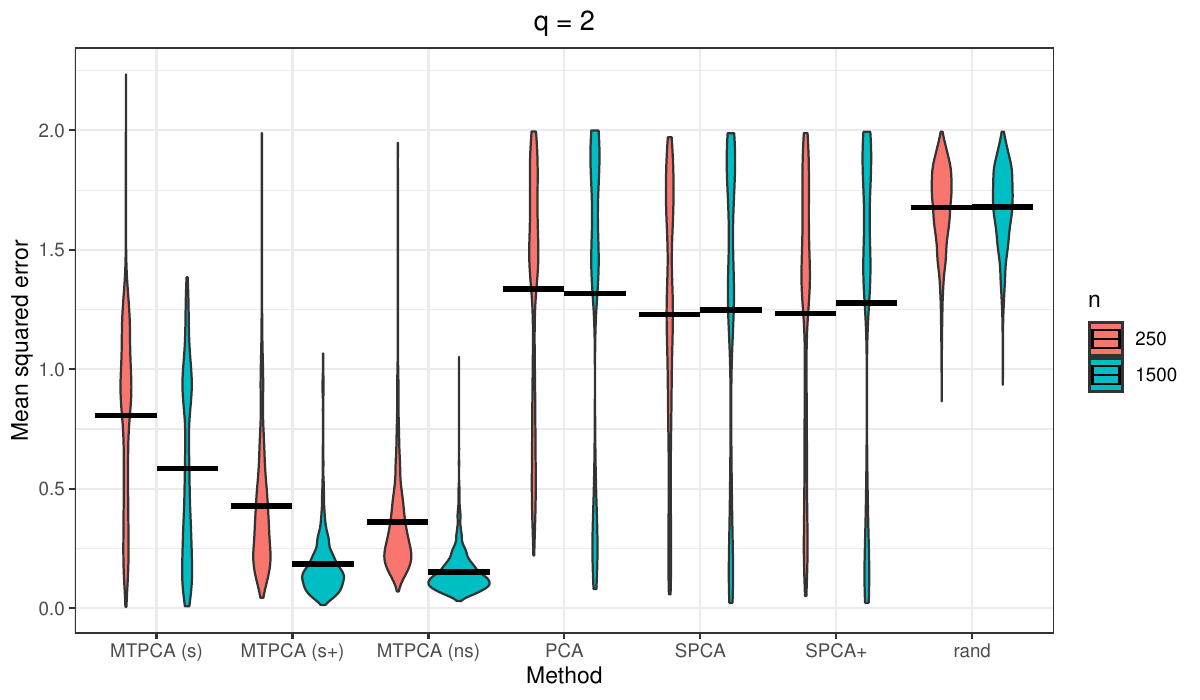}
    \caption{Performances of different methods when the simulated dataset consists of $n=250$ or $n=1500$ observations with $q=2$ out of $p=8$ coefficients non-zero. The $y$-axis gives the mean squared errors when extracting two principal components. The black horizontal lines mark the means of the empirical error distributions.}
    \label{fig:vertailu_q2}
\end{figure}

\subsection{Real data}



We apply MTPCA to the Zoo dataset \citep{zoo_111} which consists of $p = 15$ binary variables (three non-binary variables are removed). Each observational unit corresponds to one of $n = 101$ animal species and each variable describes the presence or absence of a certain physiological or ecological feature (is predator, is venomous, etc.). We use MTPCA to extract $d = 3$ latent components with $r = 5$ non-zero coefficients each. The number of components was decided based on the scree plot shown in Figure \ref{fig:Zoo_scree}, where the slope drops down after the third component. The number of non-zero coefficients $r=5$ was based on the observation that in case larger $r$ was selected, the additional non-zero loadings were typically anyway close to zero.

The resulting loadings are presented in Table \ref{tab:zoo_loadings}. We see that the first component detects the venomous animals which get a strong negative loading. These can be interpreted as outliers and this behavior closely aligns with standard PCA, where the first component tends to pick up the most deviating group of outliers due to the large variance of the associated direction. The second component separates fish (fins) and birds (feathers) from domestic land vertebrate. The third component adds to this by separating fish and birds from each other. Hence, each of the three components has a natural and meaningful interpretation.

In Figure \ref{fig:MTPCA_scores}, the animals are presented according to their component scores. Since the first component mainly shows just the separation between the venomous and non-venomous animals, we focus here on the second and third components. The scores for the different species overlap considerably which is caused by the combination of binary data and sparse loadings, leading only to a small number of possible resulting scores. Clear groups form for domestic animals, non-domestic mammals, birds, fish, aquatic mammals and other aquatic animals. In addition to this, we identify also some clearly outlying animals. 

In summary, MTPCA was able to obtain a very interpretable dimension reduction and visualization for the current data, and the analysis could be continued from here with, e.g., clustering. The main conceptual advantage of MTPCA over PCA is that regular PCA works on the level of the observed binary variables, but MTPCA models the latent features. This is especially relevant for variables such as domestic or catsize, which can be seen coming from setting a cut-off point to a more complex latent feature.


\begin{figure}
    \centering
    \includegraphics[width=0.7\linewidth]{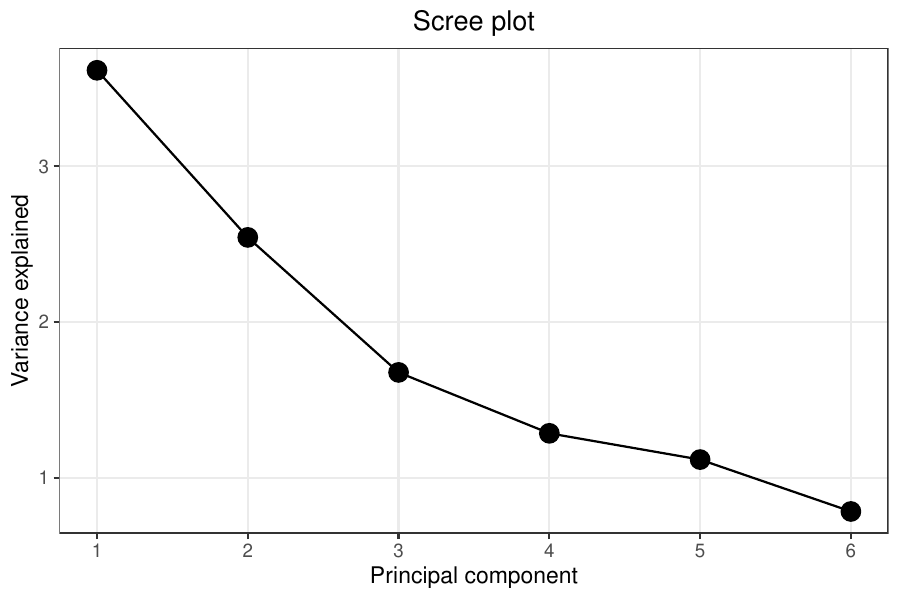}
    \caption{Scree plot for the number of components in the Zoo example}
    \label{fig:Zoo_scree}
\end{figure}

\begin{table}
\centering
\begin{tabular}{r|ccc}
 variable & PC1 & PC2 & PC3 \\ 
  \hline
  hair & $0$ & $0$ & $0$ \\ 
  feathers & $0.14$ & $0.45$ & $0.58$ \\ 
  eggs & $0$ & $0$ & $0$ \\ 
  milk & $0$ & $0$ & $0$ \\ 
  airborne & $-0.099$ & $0$ & $0.059$ \\ 
  aquatic & $0$ & $0$ & $0$ \\ 
  predator & $0$ & $0$ & $0$ \\ 
  toothed & $0$ & $0$ & $0$ \\ 
  backbone & $0$ & $-0.32$ & $-0.17$ \\ 
  breathes & $0.037$ & $-0.40$ & $0$ \\ 
  venomous & $-0.98$ & $0$ & $0.17$ \\ 
  fins & $-0.078$ & $0.36$ & $-0.78$ \\ 
  tail & $0$ & $0$ & $0$ \\ 
  domestic & $0$ & $-0.63$ & $0$ \\ 
  catsize & $0$ & $0$ & $0$ \\ 
\end{tabular}
\caption{Loadings of the original variables to the three extracted latent components for the Zoo data}
\label{tab:zoo_loadings}
\end{table}


\begin{figure}
    \centering
    \includegraphics[width=0.95\linewidth]{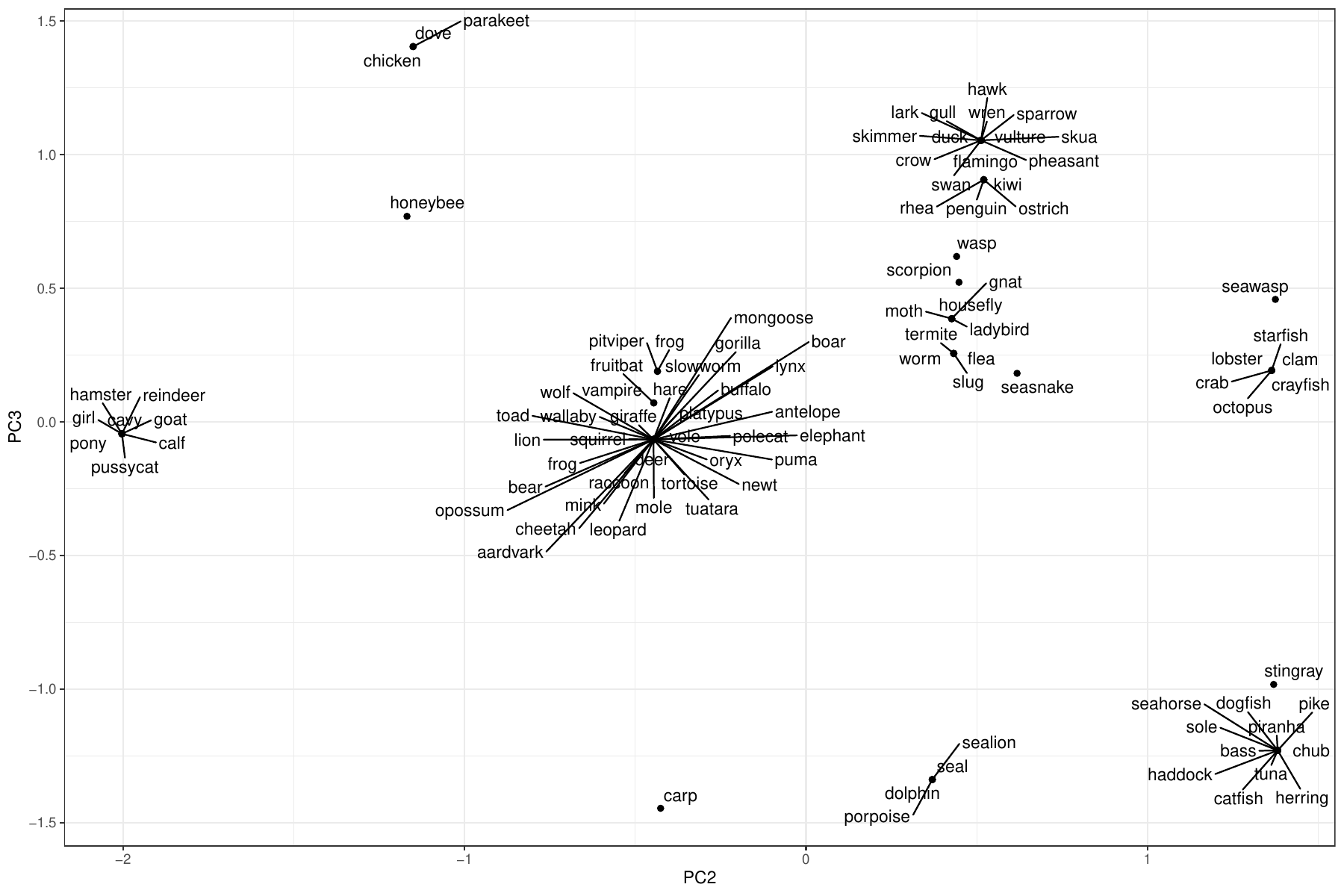}
    \caption{Scores of second and third components given by MTPCA}
    \label{fig:MTPCA_scores}
\end{figure}

\section{Discussion}


We proposed a principal component method for data consisting of different types of variables. The method is based on exponential family distributions and the method of moments. Because we give explicit formulas for the covariances of the latent combinations $u'Z$, direct asymptotics could be studied for the estimators (beyond consistency), and this is a natural future task.

Another research question concerns the consistent estimation of $d$, possibly in the manner described in Section \ref{sec:number_of_components}. Finally, additional distributions for the observed data could be considered beyond the four types we used, for example, various zero-inflated count distributions to account for a wider range of practical scenarios.


\subsection*{Author Contributions}

\textbf{Lauri Heinonen}: Conceptualization, Methodology, Software, Formal analysis, Investigation, Data Curation, Writing - Original Draft, Writing - Review \& Editing, Visualization, Project administration. \textbf{Joni Virta}: Conceptualization, Methodology, Formal analysis, Investigation, Writing - Review \& Editing, Supervision, Funding acquisition.




\subsection*{Conflicts of Interest}

The authors declare no conflicts of interest.

\bibliographystyle{chicago}
\bibliography{references}



\appendix

\section{Auxiliary results and proofs}\label{sec:proofs}
\setcounter{equation}{0}

\vspace*{12pt}


Throughout the proofs, $\varphi(\cdot)$ denotes the PDF of the standard normal distribution.

\begin{lemma}\label{lem:exp_integral}
Let $Z \sim \mathrm{N}_p(0,I_p)$, $b_1, b_2, h, v \in \mathbb{R}^p$ and $a \in \mathbb{R}$. Then
    \begin{align*}
        \mathrm{E}(e^{b_1' Z}) &= e^{\frac{1}{2} b_1' b_1}, \\
        \mathrm{E}(b_1' Ze^{b_2' Z}) &= b_1' b_2 e^{\frac{1}{2}b_2' b_2}, \\
        \mathrm{E}(Z \Phi(a + h' Z)) &= \frac{1}{\sqrt{1 + \| h \|^2}} \varphi \left( \frac{a}{\sqrt{1 + \| h \|^2}}\right) h, \\
        \mathrm{E}(e^{v'Z} \Phi(a + h' Z)) &= e^{\frac{1}{2} \| v \|^2} \Phi \left( \frac{a + h'v}{\sqrt{1 + \| h \|^2}} \right).
    \end{align*}
\end{lemma}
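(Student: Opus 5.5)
The four identities all follow from a change of measure: exponential tilting shifts the standard normal $Z$, and probit-type expectations become probabilities of linear events for an augmented Gaussian vector.

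\textbf{First identity.} Since $b_1'Z \sim \mathrm{N}(0, \|b_1\|^2)$, the standard normal moment generating function gives $\mathrm{E}(e^{b_1'Z}) = e^{\frac12 b_1'b_1}$ directly.

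\textbf{Second identity.} I would differentiate the first identity. Replace $b_1$ by $b_2 + t b_1$ there and differentiate at $t=0$. Differentiation under the expectation is justified by dominated convergence, using the integrable bound $|b_1'Z| e^{b_2'Z + |t||b_1'Z|}$ for $|t| \le 1$. This yields $\mathrm{E}(b_1'Z e^{b_2'Z}) = b_1'b_2 e^{\frac12 b_2'b_2}$.

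An equivalent route is the Cameron--Martin / exponential tilting identity
\[
\mathrm{E}\{g(Z) e^{v'Z}\} = e^{\frac12\|v\|^2}\,\mathrm{E}\{g(Z+v)\},
\]
obtained by completing the square in the density. With $g(z) = b_1'z$ it gives $e^{\frac12\|b_2\|^2}\, b_1'b_2$, since $\mathrm{E}(b_1'Z) = 0$. I will use this tilting identity again for the fourth formula.

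\textbf{Third identity.} I would use Stein's lemma, $\mathrm{E}\{Z f(Z)\} = \mathrm{E}\{\nabla f(Z)\}$, with $f(z) = \Phi(a + h'z)$. This gives $h\, \mathrm{E}\{\varphi(a + h'Z)\}$. It then remains to compute $\mathrm{E}\{\varphi(a + S)\}$ with $S = h'Z \sim \mathrm{N}(0, \|h\|^2)$. This expectation is the density at $-a$ (equivalently at $a$, by symmetry) of the sum of $S$ and an independent standard normal variable, i.e.\ of a $\mathrm{N}(0, 1 + \|h\|^2)$ variable. So it equals $(1+\|h\|^2)^{-1/2}\varphi\bigl(a/\sqrt{1+\|h\|^2}\bigr)$, which is the claimed formula.

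\textbf{Fourth identity.} By the tilting identity,
\[
\mathrm{E}\{e^{v'Z}\Phi(a + h'Z)\} = e^{\frac12\|v\|^2}\,\mathrm{E}\{\Phi(a + h'v + h'Z)\}.
\]
Then I would use the representation $\Phi(c + h'Z) = P(Y \le c + h'Z \mid Z)$ with $Y \sim \mathrm{N}(0,1)$ independent of $Z$. This gives
\[
\mathrm{E}\{\Phi(c + h'Z)\} = P(Y - h'Z \le c) = \Phi\bigl(c/\sqrt{1+\|h\|^2}\bigr),
\]
and setting $c = a + h'v$ finishes the proof.

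\textbf{Main obstacle.} None of the steps is deep. The only point needing care is the rigorous justification of the tilting identity, namely completing the square $-\frac12\|z\|^2 + v'z = -\frac12\|z - v\|^2 + \frac12\|v\|^2$ followed by the substitution $z \mapsto z + v$. Beyond that, one must check the integrability needed for Stein's lemma and for differentiating under the integral. Both hold because $\Phi$ and $\varphi$ are bounded and Gaussian tails dominate exponential and linear growth.
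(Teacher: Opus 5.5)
Your proof is correct. For the third identity, and for the one-dimensional integrals the paper takes from a table, you take a different and more self-contained route.

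\textbf{Third identity.} The paper rotates coordinates. It picks an orthogonal $G$ whose first column is $h/\|h\|$ and sets $Y = G'Z$. It then observes that only the first coordinate of $\mathrm{E}\{Y\Phi(a + \|h\|Y_1)\}$ is nonzero, by independence. It reads that scalar integral off the table of Owen (1980) and treats $h = 0$ separately. You instead apply Stein's lemma, which produces the direction $h$ immediately, with no rotation and no case split. You then evaluate the remaining scalar $\mathrm{E}\{\varphi(a + h'Z)\}$ yourself as the value of a $\mathrm{N}(0, 1 + \|h\|^2)$ density.

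\textbf{Fourth identity.} You follow the paper's completing-the-square step exactly. Where the paper again cites Owen for $\mathrm{E}\{\Phi(c + h'Q)\}$, you derive it from the representation $\Phi(x) = \mathrm{P}(Y \le x)$ with an independent standard normal $Y$. This is the same device the paper itself uses later in the proof of Lemma~\ref{lem:two_bernoullis}.

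\textbf{First two identities.} The paper only asserts these (``by partial integration''). Your moment-generating-function and differentiation/tilting arguments supply them.

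The paper's version is shorter on the page because it outsources the one-dimensional integrals. Yours needs no external table and makes the required integrability conditions explicit.
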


\begin{proof}[Proof of Lemma \ref{lem:exp_integral}]
    The first two parts can be proven by partial integration and we omit their proof. For the third part, assume that $h \neq 0$ and let $G = (g_1, \ldots, g_p)$ denote any $p \times p$ orthogonal matrix such that $g_1 = h/\| h \|$. Letting then $Y := G' Z \sim \mathrm{N}_p(0,I_p)$, we get
    \begin{align*}
        \mathrm{E}(Z \Phi(a + h' Z)) = GG'\mathrm{E}(Z \Phi(a + \| h \| g_1' Z)) = G \mathrm{E}(Y \Phi(a + \| h \| y_1)).
    \end{align*}
    By independence, only the first element of $\mathrm{E}(Y \Phi(a + \| h \| y_1))$ is non-zero and, by \cite{owen1980table}, equals
    \begin{align*}
        \frac{\| h \|}{\sqrt{1 + \| h \|^2}} \varphi \left( \frac{a}{\sqrt{1 + \| h \|^2}}\right),
    \end{align*}
    from which the claim follows for $h \neq 0$. If $h = 0$, the claim is straightforwardly checked to hold in this case as well.

    For the fourth part, we write
    \begin{align*}
        \mathrm{E}(e^{v'Z} \Phi(a + h' Z)) &= \frac{1}{(2 \pi)^{p/2}} \int e^{-\frac{1}{2} \| z \|^2} e^{v'z} \Phi(a + h'z) dz \\
        &= \frac{1}{(2 \pi)^{p/2}} e^{\frac{1}{2} \| v \|^2 } \int e^{-\frac{1}{2} \| z - v \|^2} \Phi(a + h'z) dz \\
        &= e^{\frac{1}{2} \| v \|^2 } \mathrm{E} \{ \Phi(a + h' Y) \},
    \end{align*}
    where $Y = v + Q$ and $Q \sim \mathrm{N}_p(0, I_p)$. Hence,
    \begin{align*}
        \mathrm{E}(e^{v'Z} \Phi(a + h' Z)) = e^{\frac{1}{2} \| v \|^2 } \mathrm{E}\{ \Phi(a + h' v + h' Q) \}
    \end{align*}
    and the result now follows from \cite{owen1980table}.
\end{proof}

Let us denote, in the following theorem, another variable of the same type $i$ with $X_{i*}$ (so that the symbol $'$ can be reserved for the transpose).
\begin{theorem}\label{theo:moments}
We have
    \begin{align}
    \mathrm{E}(X_{1}) &= \mu_{1} \\
    \mathrm{E}(X_{2}) &= e^{-\mu_{2} + \frac{1}{2} u_{2}' \Lambda u_{2}} =: \alpha_{2}\\
    \mathrm{E}(X_{3}) &= e^{\mu_{3} + \frac{1}{2} u_{3}' \Lambda u_{3}} =: \alpha_{3}\\
    \mathrm{E}(X_{4}) &= \Phi\left(\frac{\mu_{4}}{\sqrt{1+u_{4}' \Lambda u_{4}}}\right) =: \alpha_{4}  \\
    \mathrm{E}(X_{1}^2) &= \mu_{1}^2 + u_{1}' \Lambda u_{1} + \tau_j^2 \\
    \mathrm{E}(X_{2}^2) &= 2\alpha_{2}^2 e^{u_{2}' \Lambda u_{2}} \\
    \mathrm{E}(X_{3}^2) &=  \alpha_{3} + \alpha_{3}^2 e^{u_{3}' \Lambda u_{3}}\\
    \mathrm{E}(X_{1} X_{1*}) &= \mu_{1} \mu_{1*} + u_{1}' \Lambda u_{1*} \\
    \mathrm{E}(X_{2} X_{2*}) &= \alpha_{2}\alpha_{2*}e^{u_{2}' \Lambda u_{2*}} \\ 
    \mathrm{E}(X_{3} X_{3*}) &= \alpha_{3} \alpha_{3*} e^{u_{3}' \Lambda u_{3*}} \\
    \mathrm{E}(X_{1} X_{2}) &= \left(\mu_{1} - u_{1}' \Lambda u_{2}\right) \alpha_{2}\\
    \mathrm{E}(X_{1} X_{3}) &= \left(\mu_{1} + u_{1}' \Lambda u_{3}\right) \alpha_{3} \\
    \mathrm{E}(X_{1} X_{4}) &= \mu_{1} \alpha_{4} + \frac{u_{1}' \Lambda u_{4}}{\sqrt{1+u_{4}' \Lambda u_{4}}}\varphi\left(\frac{\mu_{4}}{\sqrt{1+u_{4}' \Lambda u_{4}}}\right) \\
    \mathrm{E}(X_{2} X_{3}) &=  \alpha_{2} \alpha_{3} e^{-u_{2}' \Lambda u_{3}} \\ 
    \mathrm{E}(X_{2} X_{4}) &= \alpha_{2} \Phi \left( \frac{\mu_{4} - u_{2}' \Lambda u_{4}}{\sqrt{1+u_{4}' \Lambda u_{4}}} \right) \\
    \mathrm{E}(X_{3} X_{4}) &= \alpha_{3} \Phi \left( \frac{\mu_{4} + u_{3}' \Lambda u_{4}}{\sqrt{1+u_{4}' \Lambda u_{4}}} \right)
\end{align}
\end{theorem}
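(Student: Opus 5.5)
The plan is to prove every identity in Theorem~\ref{theo:moments} by conditioning on $Z$ and applying the tower property. Conditional independence given $Z$ lets each mixed moment factor as $\mathrm{E}\{\mathrm{E}(X_i\mid Z)\,\mathrm{E}(X_j\mid Z)\}$. The conditional moments are standard:
\begin{itemize}
\item $\mathrm{E}(X_1\mid Z)=W_1$ and $\mathrm{E}(X_1^2\mid Z)=W_1^2+\tau^2$;
\item for the exponential (rate $e^{W_2}$), $\mathrm{E}(X_2\mid Z)=e^{-W_2}$ and $\mathrm{E}(X_2^2\mid Z)=2e^{-2W_2}$;
\item for the Poisson, $\mathrm{E}(X_3\mid Z)=e^{W_3}$ and $\mathrm{E}(X_3^2\mid Z)=e^{W_3}+e^{2W_3}$;
\item $\mathrm{E}(X_4\mid Z)=\Phi(W_4)$.
\end{itemize}
To use Lemma~\ref{lem:exp_integral}, I write $Z=\Lambda^{1/2}Y$ with $Y\sim \mathrm{N}_d(0,I_d)$. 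Each $W_i=\mu_i+u_i'Z$ then becomes $\mu_i+(\Lambda^{1/2}u_i)'Y$, and the inner products in the lemma turn into $u_i'\Lambda u_j$.

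The first-moment and pure-exponential identities follow from the first part of the lemma (the lognormal mgf). For example, $\mathrm{E}(e^{-W_2})=e^{-\mu_2+\frac12 u_2'\Lambda u_2}$. Likewise $\mathrm{E}(2e^{-2W_2})=2e^{-2\mu_2+2u_2'\Lambda u_2}=2\alpha_2^2e^{u_2'\Lambda u_2}$. Also $\mathrm{E}(e^{W_3}e^{W_{3*}})=e^{\mu_3+\mu_{3*}+\frac12(u_3+u_{3*})'\Lambda(u_3+u_{3*})}=\alpha_3\alpha_{3*}e^{u_3'\Lambda u_{3*}}$. The remaining cases are analogous:
\begin{itemize}
\item the $X_2X_{2*}$ identity, with a sign flip in the exponent;
\item the $X_2X_3$ identity, where the cross term carries a minus sign;
\item the Gaussian identities, which are elementary.
\end{itemize}

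For $X_1X_2$ and $X_1X_3$ I use the second part of the lemma. For instance, $\mathrm{E}\{(\mu_1+u_1'Z)e^{\mu_3+u_3'Z}\}=\mu_1\alpha_3+e^{\mu_3}u_1'\Lambda u_3e^{\frac12 u_3'\Lambda u_3}$, which gives $(\mu_1+u_1'\Lambda u_3)\alpha_3$. The type-2 case is the same with $u_2$ replaced by $-u_2$.

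The Bernoulli identities come from the third and fourth parts with $a=\mu_4$ and $h=\Lambda^{1/2}u_4$, so that $\|h\|^2=u_4'\Lambda u_4$.
\begin{itemize}
\item $\mathrm{E}\,\Phi(W_4)$ is the fourth part with $v=0$.
\item $\mathrm{E}(X_1X_4)=\mu_1\alpha_4+\mathrm{E}\{u_1'\Lambda^{1/2}Y\,\Phi(a+h'Y)\}$, and the third part evaluates the last term to $(1+\|h\|^2)^{-1/2}\varphi(\cdot)\,u_1'\Lambda u_4$.
\item $\mathrm{E}(X_3X_4)$ uses the fourth part with $v=\Lambda^{1/2}u_3$ and prefactor $e^{\mu_3}$, which yields $\alpha_3\Phi\{(\mu_4+u_3'\Lambda u_4)/\sqrt{1+u_4'\Lambda u_4}\}$.
\item $\mathrm{E}(X_2X_4)$ is the same with $v=-\Lambda^{1/2}u_2$.
\end{itemize}

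None of these steps is a genuine obstacle. The only place needing care is sign bookkeeping for the exponential type, whose conditional mean is $e^{-W_2}$; this is where the minus signs in the $X_1X_2$, $X_2X_3$ and $X_2X_4$ formulas originate. The routine integrals themselves are supplied by Lemma~\ref{lem:exp_integral}.
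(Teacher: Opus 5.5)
Your proposal is correct and follows essentially the same route as the paper. Both condition on $Z$, factor the cross-moments via conditional independence, write $Z=\Lambda^{1/2}Y$ with $Y$ standard normal, and evaluate each expectation with the matching part of Lemma~\ref{lem:exp_integral}; the only cosmetic difference is that the paper gets $\mathrm{E}(X_4)$ by citing Owen's integral directly rather than as the $v=0$ case of the lemma's fourth part, which amounts to the same computation.
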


\begin{proof}[Proof of Theorem \ref{theo:moments}]
(1) $\mathrm{E}(X_{1}) = \mathrm{E}(\mathrm{E}(X_{1}|Z)) = \mathrm{E}(W_{1}) = \mu_{1} + u_{1}' \mathrm{E}(Z) = \mu_{1}$. \\
(2) We use Lemma \ref{lem:exp_integral} to obtain $\mathrm{E}(X_{2}) = \mathrm{E}(\mathrm{E}(X_{2}|Z)) = \mathrm{E}(e^{-W_{2}}) = \mathrm{E}(e^{-\mu_{2} - u_{2}' Z}) = e^{-\mu_{2}}\mathrm{E}(e^{- u_{2}' Z}) = e^{-\mu_{2} + \frac{1}{2} u_{2}' \Lambda u_{2}}$. \\
(3) Similar to (2). \\
(4) Let $Q \sim \mathrm{N}_p(0,I)$, so $Z = \Lambda^{1/2}Q$. Then $\mathrm{E}(X_{4}) = \mathrm{E}(\mathrm{E}(X_{4}|Z)) = \mathrm{E}(\Phi(W_{4})) = \int_{-\infty}^\infty \Phi\left(\mu_{4} + (u'_{4}\Lambda u_{4})^{1/2} q\right)\varphi(q)dq = \Phi\big(\frac{\mu_{4}}{\sqrt{1+u_{4}' \Lambda u_{4}}}\big)$. The last equality is an integral found in \cite{owen1980table}.\\
(5) Let us write $X_{1} = \mu_{1} + u_{1}' Z + \varepsilon$, where $\varepsilon \sim \mathrm{N}(0,\tau_j^2)$ and $\mathrm{E}(Z)=\mathrm{E}(\varepsilon)=0$ and $\varepsilon \perp Z$. Then $\mathrm{E}(X_{1}^2) = \mathrm{E}(\mathrm{E}(X_{1}^2|Z)) = \mathrm{E}((\mu_{1} + u_{1}' Z + \varepsilon)^2) = \mu_{1}^2 + u_{1}'\mathrm{E}(ZZ')u_{1} + \mathrm{E}(\varepsilon^2) + 2\mu_{1} u_{1}'\mathrm{E}(Z) + 2\mu_{1}\mathrm{E}(\varepsilon) + 2u_{1}'\mathrm{E}(\varepsilon Z) = \mu_{1}^2 + u_{1}'\Lambda u_{1} + \tau_j^2$. \\
(6) We remember that if $Y \sim \mathrm{Exp}(\lambda)$, then $\mathrm{E}(Y^2)=\frac{2}{\lambda^2}$. Now $\mathrm{E}(X_{2}^2) = \mathrm{E}(\mathrm{E}(X_{2}^2|Z)) = \mathrm{E}(2e^{-2 W_{2}}) = 2\mathrm{E}(e^{-2\mu_{2} - 2u_{2}' Z}) = 2e^{-2\mu_{2}}\mathrm{E}(e^{-2u_{2}' Z}) = 2e^{-2\mu_{2} + \frac{1}{2} \times (-2)u_{2}' \Lambda (-2)u_{2}} = 2e^{-2\mu_{2} + 2u_{2}' \Lambda u_{2}}$, which can be written as $2e^{-2(\mu_{2} -\frac{1}{2} u_{2}' \Lambda u_{2})+u_{2}' \Lambda u_{2}} = 2\alpha_{2}^{2} e^{u_{2}' \Lambda u_{2}}$. \\
(7) similarly to (6): $Y \sim \mathrm{Poi}(\lambda) \implies \mathrm{E}(Y^2)=\lambda^2+\lambda$. Now $\mathrm{E}(X_{3}^2) = \mathrm{E}(\mathrm{E}(X_{3}^2|Z)) = \mathrm{E}(e^{2 W_{3}}+e^{W_{3}}) = \mathrm{E}(e^{2 W_{3}}) + \mathrm{E}(e^{W_{3}})$. The second part is $\alpha_{3}$ and the first, based on similar reasoning as (5), is $\alpha_{3}^2 e^{u_{3}' \Lambda u_{3}}$. \\
(8) Let us write again $X_{1} = \mu_{1} + u_{1}' Z + \varepsilon$ and similarly for $X_{1*}$ where $\varepsilon \perp \varepsilon_*$. We see that $X_{1} \perp X_{1*}\ |\ Z$. Then $\mathrm{E}(X_{1}X_{1*}) = \mathrm{E}(\mathrm{E}(X_{1}X_{1*}|Z)) = \mathrm{E}(\mathrm{E}(X_{1}|Z)\mathrm{E}(X_{1*}|Z))= \mathrm{E}((\mu_{1} + u_{1}' Z)(\mu_{1*} + u_{1*}' Z)) = \mu_{1}\mu_{1*} + u_{1}'\mathrm{E}(ZZ')u_{1*} + \mu_{1} u_{1*}'\mathrm{E}(Z) + \mu_{1*} u_{1}'\mathrm{E}(Z) = \mu_{1}\mu_{1*} + u_{1}'\Lambda u_{1*}.$ \\
(9) $\mathrm{E}(X_{2}X_{2*}) = \mathrm{E}(\mathrm{E}(X_{2}X_{2*}|Z)) = \mathrm{E}(\mathrm{E}(X_{2}|Z)\mathrm{E}(X_{2*}|Z))= \mathrm{E}(e^{-W_{2}}e^{-W_{2*}}) = \mathrm{E}(e^{-\mu_{2} - \mu_{2*} - (u_{2} + u_{2*})' Z}) = e^{-\mu_{2} - \mu_{2*} + \frac{1}{2}(u_{2} + u_{2*})' \Lambda (u_{2} + u_{2*})}$, which can be written as $e^{-\mu_{2} + \frac{1}{2}u_{2}' \Lambda u_{2} - \mu_{2*} + \frac{1}{2}u_{2*}' \Lambda u_{2*} + u_{2}' \Lambda u_{2*}} = \alpha_{2}\alpha_{2*}e^{u_{2}' \Lambda u_{2*}}$. \\
(10) Similar to (9). \\
(11) Using Lemma \ref{lem:exp_integral}, we write $\mathrm{E}(X_{1}X_{2}) = \mathrm{E}(\mathrm{E}(X_{1}X_{2}|Z)) = \mathrm{E}(\mathrm{E}(X_{1}|Z)\mathrm{E}(X_{2}|Z))= \mathrm{E}(W_{1}e^{-W_{2}}) = \mathrm{E}((\mu_{1} + u_{1}' Z)e^{-\mu_{2} - u_{2}' Z}) = \mathrm{E}(e^{-\mu_{2}}(\mu_{1}e^{-u_{2}' Z} + u_{1}' Z e^{-u_{2}' Z})) = e^{-\mu_{2}}(\mu_{1}e^{\frac{1}{2}u_{2}' \Lambda u_{2}} - u_{1}' \Lambda u_{2} e^{\frac{1}{2}u_{2}' \Lambda u_{2}})$, which can be written as $\left(\mu_{1} - u_{1}' \Lambda u_{2}\right) e^{-\mu_{2} + \frac{1}{2}u_{2}' \Lambda u_{2}} = \left(\mu_{1} - u_{1}' \Lambda u_{2}\right) \alpha_{2}$.\\
(12) Similar to (11).\\
(13) Similarly to (4): $\mathrm{E}(X_{1}X_{4}) = \mathrm{E}(\mathrm{E}(X_{1}|Z)\mathrm{E}(X_{4}|Z)) = \mathrm{E}(W_{1}\Phi(W_{4})) = \mathrm{E}((\mu_{1} + u'_{1}\Lambda^{1/2}Q)\Phi(\mu_{4} + u'_{4}\Lambda^{1/2}Q)) = \mu_{1}\mathrm{E}(\Phi(\mu_{4} + u'_{4}\Lambda^{1/2}Q)) + u'_{1}\Lambda^{1/2}\mathrm{E}(Q\Phi(\mu_{4} + u'_{4}\Lambda^{1/2}Q))$, where the first term is obtainable directly from (4). For the second part, we use Lemma~\ref{lem:exp_integral} to obtain $  u'_{1}\Lambda^{1/2}\mathrm{E}(Q\Phi(\mu_{4} + u'_{4}\Lambda^{1/2}Q)) = \frac{u_{1}' \Lambda u_{4}}{\sqrt{1+u_{4}' \Lambda u_{4}}}\varphi \left( \frac{\mu_{4}}{\sqrt{1+u_{4}' \Lambda u_{4}}} \right)$.\\
(14) Similar to (11) and (12)\\
(15) $\mathrm{E}(X_{2}X_{4}) = \mathrm{E}(\mathrm{E}(X_{2}|Z)\mathrm{E}(X_{4}|Z)) = \mathrm{E}(e^{-W_{2}}\Phi(W_{4})) = \mathrm{E}(e^{-\mu_{2} - u'_{2}\Lambda^{1/2}Q}\Phi(\mu_{4} + u'_{4}\Lambda^{1/2}Q)) = e^{-\mu_{2}}\mathrm{E}(e^{- u'_{2}\Lambda^{1/2}Q}\Phi(\mu_{4} + u'_{4}\Lambda^{1/2}Q)) $ from which the claim now follows using Lemma \ref{lem:exp_integral}. \\ 
(16) Similar to (15).
\end{proof}

\begin{lemma}\label{lem:two_bernoullis}
    Let $X_1 = a_1 + b_1' Z$ and $X_2 = a_2 + b_2' Z$ where $Z \sim N_d(0, I_d)$ and $a_1, a_2 \in \mathbb{R}$ and $b_1, b_2 \in \mathbb{R}^d$ are arbitrary. Then
    \begin{align*}
        & \mathrm{E}\{ \Phi(X_1) \Phi(X_2) \} = \Phi_2\left( \frac{a_1}{\sqrt{1 + \| b_1 \|^2}} , \frac{a_2}{\sqrt{1 + \| b_2 \|^2}}; \frac{b_1'b_2}{\sqrt{(1 + \| b_1 \|^2) (1 + \| b_2 \|^2)}} \right),
    \end{align*}
    where $\Phi_2(y_1, y_2; \rho) := \mathrm{P}(Y_1 \leq y_1, Y_2 \leq y_2)$ where $(Y_1, Y_2)$ has bivariate normal distribution with zero means, unit variances and correlation equal to $\rho$.
\end{lemma}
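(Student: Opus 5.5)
The plan is to use the standard latent-variable representation of the normal CDF. Introduce two auxiliary standard normal variables $E_1, E_2$, independent of each other and of $Z$. Then
\[
\Phi(X_k) = \mathrm{P}(E_k \le a_k + b_k'Z \mid Z), \qquad k = 1,2.
\]
Because $E_1$ and $E_2$ are conditionally independent given $Z$, the product of the two conditional probabilities is the joint conditional probability:
\[
\Phi(X_1)\Phi(X_2) = \mathrm{P}(E_1 - b_1'Z \le a_1,\; E_2 - b_2'Z \le a_2 \mid Z).
\]
Taking expectations over $Z$ and applying the tower property gives
\[
\mathrm{E}\{\Phi(X_1)\Phi(X_2)\} = \mathrm{P}(V_1 \le a_1, V_2 \le a_2),
\qquad V_k := E_k - b_k'Z .
\]
This is the same argument that underlies the univariate Owen identity used in Lemma \ref{lem:exp_integral}, now carried out jointly for both coordinates.

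Next, identify the law of $(V_1, V_2)$. It is a linear image of the Gaussian vector $(E_1, E_2, Z)$, so it is bivariate normal with zero means. Its second moments follow from the independence of $E_1$, $E_2$ and $Z$:
\[
\mathrm{Var}(V_k) = 1 + \|b_k\|^2, \qquad \mathrm{Cov}(V_1, V_2) = b_1'b_2 .
\]
In the covariance the cross terms vanish and the two minus signs cancel, which is why the sign of $b_1'b_2$ comes out positive.

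Finally, standardize by dividing $V_k$ by $\sqrt{1+\|b_k\|^2}$. The standardized pair has unit variances and correlation $b_1'b_2/\sqrt{(1+\|b_1\|^2)(1+\|b_2\|^2)}$. The event becomes $\{\tilde V_k \le a_k/\sqrt{1+\|b_k\|^2}\}$, which is exactly the claimed $\Phi_2$ expression.

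There is no real obstacle in this argument. The one point needing care is the step that multiplies the conditional probabilities. It relies on $E_1$ and $E_2$ being independent of each other and of $Z$, so that conditioning on $Z$ factorizes the joint probability. A Fubini argument justifies swapping the expectation and the probability, and it applies because everything involved is bounded and measurable.

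The degenerate cases need no special treatment. If $b_k = 0$, then $\mathrm{Var}(V_k) = 1$ and the formula still holds. The correlation always lies strictly inside $(-1,1)$, because each variance exceeds $\|b_k\|^2$ by $1$, so $\Phi_2$ is always well defined.
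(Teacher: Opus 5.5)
Your proposal is correct and follows the paper's proof essentially verbatim: the paper also introduces independent standard normals $Y_1, Y_2$ and rewrites the expectation as $\mathrm{P}(Y_1 \le X_1, Y_2 \le X_2)$. It then identifies $(Y_1 - X_1, Y_2 - X_2)$ as bivariate Gaussian with variances $1+\|b_k\|^2$ and covariance $b_1'b_2$ and standardizes; the only cosmetic difference is that you center the pair as $V_k = E_k - b_k'Z$ with thresholds $a_k$, whereas the paper keeps means $-a_k$ and thresholds at $0$.
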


\begin{proof}[Proof of Lemma \ref{lem:two_bernoullis}]
    Let $Y_1, Y_2$ denote two standard univariate Gaussian variables that are independent of each other and $Z$. Then, by expressing the expected value and the normal CDFs as integrals and writing $\int_{-\infty}^{x_1} \varphi(y_1) d y_1 = \int_{-\infty}^\infty \mathbb{I}(y_1 \leq x_1) \varphi(y_1) d y_1$, we see that
    \begin{align*}
        \mathrm{E}\{ \Phi(X_1) \Phi(X_2) \} = \mathrm{P}(Y_1 \leq X_1, Y_2 \leq X_2).
    \end{align*}
    The joint distribution of $(Y_1 - X_1, Y_2 - X_2)$ is then bivariate Gaussian with the mean vector $(-a_1, -a_2)'$ and the covariance matrix
    \begin{align*}
        \begin{pmatrix}
            1 + \| b_1 \|^2 & b_1' b_2 \\
            b_1' b_2 & 1 + \| b_2 \|^2,
        \end{pmatrix}
    \end{align*}
    giving
    \begin{align*}
        & \mathrm{P}(Y_1 \leq X_1, Y_2 \leq X_2) \\ =& \mathrm{P}(Y_1 - X_1 \leq 0, Y_2 - X_2 \leq 0) \\
        =& \mathrm{P}\left\{ \frac{(Y_1 - X_1) + a_1}{\sqrt{1 + \| b_1 \|^2}} \leq \frac{a_1}{\sqrt{1 + \| b_1 \|^2}}, \frac{(Y_2 - X_2) + a_2}{\sqrt{1 + \| b_2 \|^2}} \leq \frac{a_2}{\sqrt{1 + \| b_2 \|^2}} \right\},
    \end{align*}
    from which the claim follows after noting that the correlation between $Y_1 - X_1$ and $Y_2 - X_2$ is
    \begin{align*}
        \frac{b_1'b_2}{\sqrt{(1 + \| b_1 \|^2) (1 + \| b_2 \|^2)}}.
    \end{align*}
\end{proof}

The next theorem now follows directly from Lemma \ref{lem:two_bernoullis}.


\begin{theorem}\label{theo:moments_2}
    For two Bernoulli-variates $X_{4}, X_{4*}$, we have under Identifiability constraint \ref{id:bernoulli} that
    \begin{align*}
        \mathrm{E}(X_{4} X_{4*}) = \Phi_2\left( s_{4} \sqrt{ \sigma_{4}^2 } , s_{4*} \sqrt{ \sigma_{4*}^2 }, \frac{\sigma_{44*}}{\sqrt{(1 + \sigma_{4}^2) (1 + \sigma_{4*}^2)}} \right).
    \end{align*}
    
\end{theorem}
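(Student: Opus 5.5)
We compute $\mathrm{E}(X_4 X_{4*})$ by conditioning on $Z$ and then apply Lemma~\ref{lem:two_bernoullis}. By conditional independence of the coordinates of $X$ given $Z$, and since $\mathrm{E}(X_4 \mid Z) = \Phi(W_4)$ for a Bernoulli variable, we have
\begin{align*}
\mathrm{E}(X_4 X_{4*}) = \mathrm{E}\{\mathrm{E}(X_4\mid Z)\,\mathrm{E}(X_{4*}\mid Z)\} = \mathrm{E}\{\Phi(\mu_4 + u_4'Z)\,\Phi(\mu_{4*} + u_{4*}'Z)\}.
\end{align*}
To match the standard normal setting of Lemma~\ref{lem:two_bernoullis}, write $Z = \Lambda^{1/2}Q$ with $Q \sim \mathrm{N}_d(0, I_d)$, as in the proof of Theorem~\ref{theo:moments}. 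Then set $a_1 = \mu_4$, $a_2 = \mu_{4*}$, $b_1 = \Lambda^{1/2}u_4$ and $b_2 = \Lambda^{1/2}u_{4*}$.

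With these choices, $\|b_1\|^2 = u_4'\Lambda u_4 = \sigma_4^2$, $\|b_2\|^2 = \sigma_{4*}^2$ and $b_1'b_2 = u_4'\Lambda u_{4*} = \sigma_{44*}$, since $\Sigma = U\Lambda U'$. Lemma~\ref{lem:two_bernoullis} therefore gives
\begin{align*}
\mathrm{E}(X_4 X_{4*}) = \Phi_2\left( \frac{\mu_4}{\sqrt{1+\sigma_4^2}}, \frac{\mu_{4*}}{\sqrt{1+\sigma_{4*}^2}}; \frac{\sigma_{44*}}{\sqrt{(1+\sigma_4^2)(1+\sigma_{4*}^2)}} \right).
\end{align*}

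Finally, Identifiability constraint~\ref{id:bernoulli} states that $\mu_4/\sqrt{1+\sigma_4^2} = s_4\sqrt{\sigma_4^2}$, and similarly for $X_{4*}$ with sign $s_{4*}$. Substituting these into the first two arguments of $\Phi_2$ yields the claimed formula.

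There is no real obstacle. The only points needing care are the identification $\|b_1\|^2 = \sigma_4^2$ and $b_1'b_2 = \sigma_{44*}$ via $\Sigma = U\Lambda U'$, and the use of conditional independence in the first step.
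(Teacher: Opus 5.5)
Your proof is correct and is the same argument the paper intends: the paper only remarks that the result follows directly from Lemma~\ref{lem:two_bernoullis}. You supply the omitted details, namely conditioning on $Z$, writing $Z = \Lambda^{1/2}Q$ with $b_i = \Lambda^{1/2}u$ so that $\|b_i\|^2$ and $b_1'b_2$ become entries of $\Sigma$, and using Identifiability constraint~\ref{id:bernoulli} to rewrite the first two arguments of $\Phi_2$.
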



\begin{proof}[Proof of Theorem \ref{theo:z_prediction}]
    From the Bayes rule we get $f_{Z \mid X}(z \mid x) = f_{X \mid Z}(x \mid z) f_{Z}(z)/ f_{X}(x)$, where the denominator does not depend on $z$ and the term $f_{X \mid Z}(x \mid z)$ factorizes, by the variables' conditional independence, into a product of the conditional densities of the individual $X$'s. Each Gaussian part thus produces the likelihood
    \begin{align*}
       \exp \left\{ -\frac{1}{2 \tau_j^2} (x_{1j} - \mu_{1j} - u_{1j}'z)^2 \right\}, 
    \end{align*}
    whereas each exponential observation produces
    \begin{align*}
        \exp(\mu_{2j} + u_{2j}' z) \exp\{ - x_{2j} \exp(\mu_{2j} + u_{2j}' z) \}.
    \end{align*}
    Similarly, the Poisson-observations contribute
    \begin{align*}
        \exp\{ x_{3j} (\mu_{3j} + u_{3j}' z) \} \exp\{ -\exp(\mu_{3j} + u_{3j}' z) \}
    \end{align*}
    and the Bernoulli-observations
    \begin{align*}
        \{ \Phi(\mu_{4j} + u_{4j}' z)  \}^{x_{4j}} \{ 1 - \Phi(\mu_{4j} + u_{4j}' z)  \}^{1 - x_{4j}}.
    \end{align*}
    Finally, the density $f_{Z}$ produces the part $\exp((-1/2) z' \Lambda^{-1} z)$, giving the desired form.

    For the concavity, we first note that $z \mapsto (-1/2) z' \Lambda^{-1} z)$ is strictly concave since $\Lambda$ is positive definite and, hence, it suffices to show that all other the summands in the conditional log-density are concave. This follows by noting that each of them is either a linear function or a composition of some function $g$ with a linear map of the form ``$\mu + u' z$''. And each of these functions $g$, namely $x \mapsto -(c - x)^2$, $x \mapsto -\exp(x)$, $x \mapsto \log \Phi(x)$ and $x \mapsto \log \{ 1 - \Phi(x) \}$, are easily checked to be concave. For the last two, this follows from the inequality $x \Phi(x) + \varphi(x) > 0$, which is seen to hold as follows: $x \mapsto g(x) := x \Phi(x) + \varphi(x)$ can be checked to have everywhere positive derivative. Moreover, $g(0) > 0$, meaning that $g(x) > 0$ holds once we show that $x \Phi(x) \rightarrow 0$ when $x \rightarrow -\infty$. But this follows instantly as the left tail of the normal CDF vanishes at superlinear rate. The strict concavity thus follows.

    The existence and the uniqueness of the maximizer follows from strict concavity once we show that $z \mapsto \ell(z \mid x)$ is \textit{coercive} in the sense that if $\| z \| \rightarrow \infty$ then $\ell(z \mid x) \rightarrow -\infty$. To see this, we write
    \begin{align*}
        \ell(z \mid x) \leq& C - \frac{1}{2} z' \Lambda^{-1} z + \sum_{j = 1}^{p_2}( \mu_{2j} + u_{2j}'z) + \sum_{j = 1}^{p_3} x_{3j}(\mu_{3j} + u_{3j}'z) \\
        +& \sum_{j = 1}^{p_4} \{ x_{4j} \log \Phi(\mu_{4j} + u_{4j}' z) + (1 - x_{4j}) \log [1 - \Phi(\mu_{4j} + u_{4j}' z) ] \} \\
        \leq& C - \frac{1}{2} z' \Lambda^{-1} z + \sum_{j = 1}^{p_2}( \mu_{2j} + u_{2j}'z) + \sum_{j = 1}^{p_3} x_{3j}(\mu_{3j} + u_{3j}'z),
    \end{align*}
    where the quadratic term $-z' \Lambda^{-1} z/2$ dominates when $\| z \| \rightarrow \infty$ and the positive-definiteness of $\Lambda$ ensures that the expression goes to $-\infty$, concluding the proof. 
\end{proof}


\end{document}